\setlist{nolistsep}
\title{On the Minimum Degree up to Local Complementation: Bounds and Complexity}
\author{J\'er\^ome Javelle\inst{2}, Mehdi Mhalla\inst{1,2} and Simon Perdrix\inst{1,2}}
\institute{CNRS \and Laboratoire d'Informatique de Grenoble, Grenoble University}
\date{\today}
\begin{document}

\maketitle

\sloppy

\begin{abstract}
The local minimum degree of a graph is the minimum degree reached by means of a series of local complementations.
In this paper, we investigate on this quantity which plays an important role in quantum computation and quantum error correcting codes.

First, we show that the local minimum degree of the Paley graph of order $p$ is greater than $\sqrt{p} - \frac{3}{2}$, which is, up to our knowledge, the highest known bound on an explicit family of graphs.
Probabilistic methods allows us to derive the existence of an infinite number of graphs whose local minimum degree is linear in their order with constant $0.189$ for graphs in general and $0.110$ for bipartite graphs.
As regards the computational complexity of the decision problem associated with the local minimum degree, we show that it is NP-complete and that there exists no $l$-approximation algorithm for this problem for any constant $l$ unless $P = NP$.
\end{abstract}

\section{Introduction}

For any undirected graph $G$, the local complementation is an operation which consists in complementing the neighborhood of a given vertex of a graph. It that has been introduced by Kotzig \cite{kotzig} and the study of this quantity is motivated by several applications: Bouchet \cite{Bouchet,Bouchet_circle_graphs} and de Fraysseix \cite{HdF} used local complementation to give a characterization of circle graphs, and Oum \cite{Oum_vertex_minor} links the notion of ``vertex minor of a graph" to the equivalence classes up to local complementation.
One of the most important results is established by Bouchet in \cite{Bouchet_LCeq_poly}: deciding whether two graphs are equivalent up to local complementations can be done in polynomial time.

In the field of quantum information theory, the rate of some quantum codes obtained by graph concatenation can be bounded by the minimum degree up to local complementation (called ``local minimum degree" and denoted $\delta_{loc}$) of the constructed graphs \cite{graph_concatenation}.
Another application of $\delta_{loc}$ is the preparation of graph states (quantum states represented by a graph), which are a very powerful tool used for measurement-based quantum computing \cite{BRQC} and blind quantum computing \cite{blind}, for example.
In \cite{GSprep}, it has been proven that the complexity of preparation of a graph state is bounded by its local minimal degree.
Threshold quantum secret sharing protocols from graph states (first introduced in \cite{MS}) can be built from graph states with the methods described in \cite{QSS_GS}, and the local minimum degree of the corresponding graphs gives, under additional parity conditions, a value for the threshold that can be reached with these graph states.
Moreover, we also focus on bipartite graphs which are of high interest for entanglement purification \cite{ent_2_color} and the study of Schmidt measure \cite{Schmidt_measure}, for example.

In this paper, several techniques from different backgrounds are used.
We consider a family of graphs defined from quadratic residues, the Paley graphs $Pal_p$, and the bound that we give on $\delta_{loc}(Pal_p)$ is closely related to a fundamental result in algebraic geometry (see Lemma \ref{Joyner_bound}).
Probabilistic methods are also used to prove the existence of graphs with large local minimum degree.
In particular, we use the asymmetric version of the Lov\'asz Local Lemma \cite{LLL} (see Lemma \ref{LLL}) to prove the existence of an infinite family of graphs with linear $\delta_{loc}$.
We also use this family to derive a polynomial reduction to a problem from coding theory in order to find the computational complexity of finding the local minimum degree of a graph in the general case.

In section \ref{def+Paley}, we recall the definition of the local minimum degree, main notion of this paper, and we give an explicit family of graphs $Pal_p$ of order $p$ such that $\delta_{loc}(Pal_p) \geq \sqrt{p} - \frac{3}{2}$, which is, up to our knowledge, the best known lower bound for any family of graphs.
The next section is dedicated to the proof of the existence of graphs with linear $\delta_{loc}$.
In the last section, we show that the decision problem associated with $\delta_{loc}$ is NP-complete even on the family of bipartite graphs, and we show that there exists no approximation algorithm up to a constant factor for this problem unless $P=NP$.

\section{Definitions}
\label{def+Paley}

Local complementation is defined as follows:

\begin{definition}
The local complementation of a graph $G$ with respect to one of its vertices $u$ results in a graph $G * u = G \Delta K_{\mathcal{N}(u)}$ where $\Delta$ stands for the symmetric difference between edges and $K_{\mathcal{N}(u)}$ is the complete graph on the neighbors of $u$.
\end{definition}

The transitive closure of a graph with respect to the local complementation forms an equivalence class.
In \cite{Bouchet_LCeq_poly}, Bouchet gives a polynomial algorithm that tells whether any two graphs are in the same equivalence class with respect to local complementation.
For a given graph $G$, the quantity we will focus on is the minimum degree of the graphs in its equivalence class.
This value is called the local minimum degree and is written $\delta_{loc}(G)$.
Its formal definition follows:

\begin{definition}
\label{def_delta_loc}
Given a graph $G$,
$\delta_{loc}(G) = min \left\{ \ \delta(G') \ \big| \ G \equiv _{LC}G' \ \right\}$
where $\delta(G')$ is the minimal degree of $G'$ and the equivalence relation $G_1 \equiv_{LC} G_2$ is verified when $G_1$ can be changed into $G_2$ by a series of local complementations.
\end{definition}

In \cite{GSprep}, a characterization of the quantity $\delta_{loc}$ has been established by means of the odd and even neighborhoods of subsets of vertices of a graph defined as follows:

\begin{definition}
Let $G$ be an undirected graph and $D$ a subset of its vertices.
\begin{align}
Odd(D) &= \big\{ \ v \in V(G) \ \big| \ |\mathcal{N}(v) \cap D| = 1 \bmod 2 \ \big\} \\
Even(D) &= \big\{ \ v \in V(G) \ \big| \ |\mathcal{N}(v) \cap D| = 0 \bmod 2 \ \big\}
\end{align}
\end{definition}

The local minimum degree is related to the size of the smallest set of the form $D \cup Odd(D)$:

\begin{property}[\cite{GSprep}]
\label{char_delta_loc}
Let $G$ be an undirected graph.
\begin{align}
\delta_{loc}(G) &= \min \left\{ \ |D \cup Odd(D)| \ \big| \ D \neq \varnothing, D \subseteq V(G) \ \right\} -1
\end{align}
\end{property}

\section{Local minimum degree of Paley graphs}

It is challenging to find a family of graphs with ``high" local minimum degree.
The family of hypercubes, for example, has a logarithmic local minimal degree \cite{GSprep}.

In the following, we prove that a Paley graph of order $n$ has a $\delta_{loc}$ greater than $\sqrt{n}$.
This value is only a lower bound, and we do not know whether it is reached.
This family is defined with quadratic residues over a finite field.
Up to our knowledge, there is no known family of graphs whose local minimum degree is greater than the square root of their order.

For any prime $p$ such that $p = 1 \bmod 4$, the Paley graph $Pal_p$ is a graph on $p$ vertices where each vertex is an element of $\mathbb{F}_p$.
There is an edge between two vertices $i$ and $j$ if and only if $i-j$ is a square in $\mathbb{F}_p$.

\begin{theorem}
\label{delta_loc_Paley}
For any prime $p = 1 \bmod 4$,
\begin{align}
\delta_{loc}(Pal_p) &\geq \sqrt{p} - \frac{3}{2}
\end{align}
where $Pal_p$ is the Paley graph of order $p$.
\end{theorem}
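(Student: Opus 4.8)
The plan is to argue directly from the combinatorial characterisation of Property~\ref{char_delta_loc}: it suffices to prove that every nonempty $D \subseteq V(Pal_p)$ satisfies $|D \cup Odd(D)| \geq \sqrt p - \tfrac12$, since subtracting $1$ then gives the stated bound on $\delta_{loc}$. I would split on the cardinality $d := |D|$. If $d \geq \sqrt p - \tfrac12$ there is nothing to prove, because $D \subseteq D \cup Odd(D)$ forces $|D \cup Odd(D)| \geq d$. Hence the only real content is the regime $d < \sqrt p - \tfrac12$, where $Odd(D)$ is forced to be large.

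The key idea is to encode the parity defining $Odd(D)$ through the quadratic character. Writing $\chi$ for the Legendre symbol $\legendre{\cdot}{p}$, adjacency of $v$ and $d'$ in $Pal_p$ is exactly the event $\chi(v-d')=1$, and for $v \neq d'$ one checks the elementary identity $(-1)^{[\chi(v-d')=1]} = -\chi(v-d')$, where $[\cdot]$ is the Iverson indicator. Multiplying over $d' \in D$ gives, for every $v \notin D$, the relation $(-1)^{|\mathcal N(v) \cap D|} = (-1)^{d}\,\chi\!\big(P_D(v)\big)$, where $P_D(X) = \prod_{d' \in D}(X-d')$ is the squarefree polynomial vanishing exactly on $D$. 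Thus whether a vertex $v \notin D$ lies in $Odd(D)$ is governed solely by the sign of $\chi(P_D(v))$.

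From here the count is bookkeeping. Let $n_{+}$ and $n_{-}$ denote the numbers of $v \notin D$ with $\chi(P_D(v)) = +1$ and $-1$, so that $n_{+}+n_{-} = p-d$ and $n_{+}-n_{-} = \Sigma$, where $\Sigma := \sum_{v \in \mathbb F_p} \chi(P_D(v))$ (the terms with $v \in D$ vanish). According to the parity of $d$, the identity above makes $|Odd(D)\setminus D|$ equal to $n_{-}$ or $n_{+}$, so in either case $|D \cup Odd(D)| = d + |Odd(D)\setminus D| = \tfrac12\big(p + d \mp \Sigma\big)$, and in both cases the lower bound I need follows from an upper bound on $|\Sigma|$. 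This is exactly where the arithmetic input enters: Lemma~\ref{Joyner_bound}, the Weil character-sum estimate applied to the squarefree $P_D$ of degree $d$, yields $|\Sigma| \leq (d-1)\sqrt p$. Substituting gives $|D \cup Odd(D)| \geq \tfrac12\big(p + d - (d-1)\sqrt p\big)$, a quantity decreasing in $d$ that a short computation shows exceeds $\sqrt p$ precisely when $d < \sqrt p$; since that holds throughout the regime $d < \sqrt p - \tfrac12$, the two cases together establish $|D \cup Odd(D)| \geq \sqrt p - \tfrac12$ for all $D$.

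The only genuine obstacle is the character-sum bound of Lemma~\ref{Joyner_bound}: that is the deep (Weil) ingredient, and everything else is the identification of $Odd(D)$ with a Legendre sign followed by elementary counting. Before finalising I would verify the degenerate case $d=1$ (where $P_D$ is linear, $\Sigma = 0$, and $|D\cup Odd(D)| = \tfrac{p+1}{2}$) and examine the boundary $d \approx \sqrt p$ carefully, to be sure no off-by-one in the split or in the constant of Lemma~\ref{Joyner_bound} erodes the additive $\tfrac32$.
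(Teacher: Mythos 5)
Your proposal is correct and follows essentially the same route as the paper: it encodes membership in $Odd(D)$ via the Legendre character of $\prod_{j\in D}(v-j)$ (the paper's Lemma~\ref{odd-even}, which the paper phrases as a difference $\big||S\cup Odd(S)|-|S\cup Even(S)|\big|$ rather than via your $n_{+},n_{-}$ count), then applies the Weil--Joyner estimate and splits on $|D|$ versus $\sqrt p$ exactly as in Lemma~\ref{bound_odd_even}. The only nit is that Lemma~\ref{Joyner_bound} actually reads $(|D|-1)\sqrt p + 1$ rather than $(|D|-1)\sqrt p$; that extra $+1$ costs precisely the half you budgeted for, so the conclusion $|D\cup Odd(D)|\geq \sqrt p - \tfrac{1}{2}$, and hence the theorem, survives intact.
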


The rest of this section is dedicated to the proof of Theorem \ref{delta_loc_Paley}.
To this end, we give a bound on the size of the sets of the form $D \cup Odd(D)$ in Paley graphs.
The size of such sets is characterized as follows:

\begin{lemma}
\label{odd-even}
For any non-empty set $S \subseteq V(Pal_p)$ and any $i \in V(Pal_p)$,
\begin{align}
\left| \sum_{i=0}^{p-1} \chi_L \left( f_S(i) \right) \right| &= \big| \left| S \cup Odd(S) \right| - \left| S \cup Even(S) \right| \big|
\end{align}
where $f_S(i) = \prod_{j \in S} (i-j)$ and $\chi_L$ is the Legendre character ($\chi_L(x) = x^{\frac{p-1}{2}} \bmod p$).
\end{lemma}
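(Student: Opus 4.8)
The plan is to exploit the complete multiplicativity of the Legendre character $\chi_L$, which lets me factor $\chi_L(f_S(i)) = \prod_{j \in S} \chi_L(i-j)$ and then read each factor $\chi_L(i-j)$ as an adjacency indicator in $Pal_p$. By definition there is an edge between $i$ and $j$ exactly when $i-j$ is a nonzero square, i.e. when $\chi_L(i-j) = +1$; and $\chi_L(i-j) = -1$ precisely when $i$ and $j$ are distinct and non-adjacent. Thus the sign of each factor records whether $j \in S$ is a neighbor of $i$, and the sign of the whole product will encode a parity that I can compare directly with the definitions of $Odd(S)$ and $Even(S)$.

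First I would dispose of the vertices $i \in S$: here one of the factors is $\chi_L(i-i) = \chi_L(0) = 0$, so these terms contribute nothing to the sum. For $i \notin S$, write $d_i = |\mathcal{N}(i) \cap S|$ for the number of neighbors of $i$ inside $S$. Since $i \notin S$ every factor is $\pm 1$, with exactly $d_i$ of them equal to $+1$ and the remaining $|S| - d_i$ equal to $-1$, so that
\begin{align}
\chi_L(f_S(i)) = (-1)^{|S| - d_i}.
\end{align}
Hence the term is determined purely by the parity of $d_i$ together with the fixed parity of $|S|$, and membership in $Odd(S)$ versus $Even(S)$ produces opposite signs.

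Next I would sum over $V(Pal_p) = \mathbb{F}_p$ using the partition $V(Pal_p) = Odd(S) \sqcup Even(S)$. Writing $a = |Odd(S) \setminus S|$ and $b = |Even(S) \setminus S|$, and recalling that the $S$-terms vanish, the sum collapses to $\pm(a-b)$, where the global sign depends only on the parity of $|S|$. Since $Odd(S)$ and $Even(S)$ partition $V(Pal_p)$, adding $S$ to each gives $|S \cup Odd(S)| - |S \cup Even(S)| = |Odd(S) \setminus S| - |Even(S) \setminus S| = a - b$, so taking absolute values on both sides removes the dependence on the parity of $|S|$ and yields the claimed identity. The only genuine subtlety is exactly this parity-of-$|S|$ case split, which is precisely why the statement is stated with absolute values on either side; everything else follows directly from the multiplicativity of $\chi_L$ and the definitions of $Odd$ and $Even$.
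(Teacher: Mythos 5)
Your proof is correct and follows essentially the same route as the paper's: factor $\chi_L(f_S(i))$ by multiplicativity, observe that terms with $i \in S$ vanish, identify each remaining term as $(-1)^{|S|-|\mathcal{N}(i)\cap S|}$, and convert the resulting difference of exclusive neighborhoods into the stated identity. The only difference is cosmetic: you spell out the parity-of-$|S|$ sign and the passage from $\big| |Odd(S)\setminus S| - |Even(S)\setminus S| \big|$ to $\big| |S\cup Odd(S)| - |S\cup Even(S)| \big|$, which the paper leaves more implicit.
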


\begin{proof}
First, note that $\chi_L(0) = 0$, $\chi_L(x) = 1$ if $x$ is a quadratic residue in $\mathbb{F}_p$ and $\chi_L(x) = -1$ otherwise.
Since the Legendre character is multiplicative, $\left| \sum_{i=0}^{p-1} \chi_L \left( f_S(i) \right) \right| = \left| \sum_{i=0}^{p-1} \prod_{j \in S} \chi_L(i-j) \right|$.
If $i \in S$ the quantity $\prod_{j \in S} \chi_L(i-j)$ equals $0$.
Otherwise, the product equals $(-1)^{|S|-|\mathcal{N}(i) \cap S|}$, which is $(-1)^{|S|}$ if $i \in Even(S) \setminus S$ and $-(-1)^{|S|}$ if $i \in Odd(S) \setminus S$.
Then, the sum over all vertices $i$ is the difference between the exclusive odd and even neighborhood of the set $S$: $\left| \sum_{i=0}^{p-1} \prod_{j \in S} \chi_L(i-j) \right| = \big| \left| Odd(S) \setminus S \right| - \left| Even(S) \setminus S \right| \big|$.
The last expression can be written $\big| \left| S \cup Odd(S) \right| - \left| S \cup Even(S) \right| \big|$.
\hfill $\Box$
\end{proof}

A well-known result from algebraic geometry related to the hyperelliptic curve of equation $y^2 = \prod_{j \in S} (x-j)$ can be found in \cite{Weil} or \cite{Schmidt}, for example, and is reformulated by Joyner in \cite{JoynerHEC}:

\begin{lemma}[\cite{JoynerHEC}, Proposition 1]
\label{Joyner_bound}
For any non-empty set $S \subseteq \mathbb{F}_p$, let $f_S(x) = \prod_{j \in S} (x-j)$. Then
\begin{align}
\left| \sum_{i \in \mathbb{F}_p} \chi_L \left( f_S(i) \right) \right| &\leq (|S|-1) \sqrt{p} +1
\end{align}
\end{lemma}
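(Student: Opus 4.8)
The plan is to read the character sum as a point count on the hyperelliptic curve $C : y^2 = f_S(x)$ over $\mathbb{F}_p$ and then invoke the Hasse--Weil bound, whose depth is the real content of the statement (and the reason it is quoted from \cite{Weil,Schmidt,JoynerHEC} rather than reproved here). First I would observe that for each $x \in \mathbb{F}_p$ the number of $y \in \mathbb{F}_p$ with $y^2 = f_S(x)$ equals $1 + \chi_L(f_S(x))$: it is $2$ when $f_S(x)$ is a nonzero square, $0$ when it is a non-square, and $1$ when $f_S(x) = 0$. Summing over $x$ gives
\begin{align}
N_{\mathrm{aff}} = \sum_{x \in \mathbb{F}_p} \big(1 + \chi_L(f_S(x))\big) = p + \sum_{x \in \mathbb{F}_p} \chi_L(f_S(x)),
\end{align}
so the quantity to be bounded is exactly $N_{\mathrm{aff}} - p$, where $N_{\mathrm{aff}}$ denotes the number of affine $\mathbb{F}_p$-points of $C$.

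Next I would pass to the smooth projective model $\widetilde{C}$. Because $S$ is a set, $f_S$ is monic with $|S|$ distinct roots, so the affine curve $C$ is nonsingular and its smooth completion is hyperelliptic of genus $g = \lfloor (|S|-1)/2 \rfloor$. The Hasse--Weil estimate then yields $\big| \#\widetilde{C}(\mathbb{F}_p) - (p+1) \big| \leq 2g\sqrt{p}$, and it only remains to relate $\#\widetilde{C}(\mathbb{F}_p)$ to $N_{\mathrm{aff}}$ by counting points at infinity. This bookkeeping, together with the parity of $|S|$, is precisely where the stated constant comes from, and is the one genuinely delicate point once Hasse--Weil is granted.

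I would finish by splitting on the parity of $d = |S|$. If $d$ is odd then $2g = d-1 = |S|-1$ and $\widetilde{C}$ has a single (ramified) point at infinity, so $\#\widetilde{C}(\mathbb{F}_p) = N_{\mathrm{aff}} + 1$ and Hasse--Weil gives $|N_{\mathrm{aff}} - p| \leq (|S|-1)\sqrt{p}$. If $d$ is even then $2g = d-2 = |S|-2$, and since the leading coefficient of $f_S$ is $1$ (a square) there are two points at infinity, so $\#\widetilde{C}(\mathbb{F}_p) = N_{\mathrm{aff}} + 2$, whence $|N_{\mathrm{aff}} - p| \leq (|S|-2)\sqrt{p} + 1$. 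In both cases the right-hand side is at most $(|S|-1)\sqrt{p} + 1$, which is the claimed inequality. The main obstacle is of course the Hasse--Weil bound itself; granting it, the only subtlety is the at-infinity count, where overlooking the even-degree case (or the fact that a square leading coefficient contributes two rational points at infinity) would spoil the exact constant.
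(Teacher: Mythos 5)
Your proposal is correct, and it is essentially the proof that the paper delegates to its citation rather than reproducing: the paper itself gives no argument for this lemma, but explicitly describes it as a result ``related to the hyperelliptic curve of equation $y^2 = \prod_{j \in S} (x-j)$'' due to Weil, which is exactly your reduction of the character sum to the affine point count $N_{\mathrm{aff}} - p$ followed by the Hasse--Weil bound on the smooth model. Your bookkeeping is accurate at the delicate points --- squarefreeness of $f_S$ (since $S$ is a set), genus $\lfloor(|S|-1)/2\rfloor$, one rational point at infinity for odd $|S|$ versus two (because $f_S$ is monic) for even $|S|$ --- so the stated constant $(|S|-1)\sqrt{p}+1$ comes out correctly in both parity cases.
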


This allows us to derive a bound on the sets of type $S \cup Odd(S)$ and $S \cup Even(S)$ in Paley graphs.

\begin{lemma}
\label{bound_odd_even}
Let $Pal_p$ be the Paley graph of order $p$.
For all $S \subseteq V(P_p)$, $S \neq \varnothing$, we have $\sqrt{p} - \frac{1}{2} \leq \left| S \cup Odd(S) \right|$ and $\sqrt{p} - \frac{1}{2} \leq \left| S \cup Even(S) \right|$.
\end{lemma}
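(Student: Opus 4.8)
The plan is to combine the two preceding lemmas with two elementary counting identities for $Odd(S)$ and $Even(S)$. Write $a = \left| S \cup Odd(S) \right|$ and $b = \left| S \cup Even(S) \right|$. Lemma~\ref{odd-even} together with Lemma~\ref{Joyner_bound} immediately controls the \emph{gap}: $|a-b| = \left|\sum_{i}\chi_L(f_S(i))\right| \leq (|S|-1)\sqrt{p}+1$. The whole difficulty is that a bound on $|a-b|$ is not by itself a lower bound on each of $a$ and $b$; to separate them I first need control on their \emph{sum}.

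So the first step is to establish the identity $a + b = |S| + p$. This holds because every vertex has either an odd or an even number of neighbours in $S$, so $Odd(S)$ and $Even(S)$ partition $V(Pal_p)$; hence $|Odd(S)| + |Even(S)| = p$ and $|S \cap Odd(S)| + |S \cap Even(S)| = |S|$, and inclusion--exclusion on each union gives $a + b = 2|S| + p - |S| = |S| + p$. Combining this with the gap bound yields, symmetrically for both quantities,
\[
a,\, b \;\geq\; \frac{(a+b) - |a-b|}{2} \;\geq\; \frac{|S| + p - (|S|-1)\sqrt{p} - 1}{2} \;=:\; g(|S|).
\]

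The finishing step exploits that $g$ is \emph{decreasing} in $|S|$ (since $\sqrt{p} > 1$ for $p \geq 5$): it is a strong bound only for small $S$ and degrades as $S$ grows. For large $S$ I would instead use the trivial inequalities $a \geq |S|$ and $b \geq |S|$, which come from $S \subseteq S \cup Odd(S)$ and $S \subseteq S \cup Even(S)$. Taking the better of the two bounds, $a, b \geq \max(|S|, g(|S|))$, and minimizing the right-hand side over the real variable $|S|$ — the minimum of the maximum of an increasing and a decreasing function is attained at their crossing point, which solves to $|S| = \sqrt{p} - \tfrac{1}{\sqrt{p}+1}$ — gives $a, b \geq \sqrt{p} - \tfrac{1}{\sqrt{p}+1} \geq \sqrt{p} - \tfrac{1}{2}$, as required (the continuous minimum is a valid lower bound even though $|S|$ is an integer). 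Equivalently one may simply split into the cases $|S| \geq \sqrt{p}-\tfrac{1}{2}$, handled by the trivial bound, and $|S| < \sqrt{p}-\tfrac{1}{2}$, handled by monotonicity of $g$.

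I expect the main obstacle to be precisely this balancing: the Joyner estimate alone, routed through $g$, becomes vacuous for large $S$, so the crux is recognizing that it must be paired with the elementary bound $\left| S \cup Odd(S) \right| \geq |S|$, with the sum identity $a+b = |S|+p$ serving as the bridge that converts the Joyner gap bound into an individual lower bound. The remaining arithmetic (locating the crossing point and checking $\tfrac{1}{\sqrt{p}+1} \leq \tfrac{1}{2}$) is routine.
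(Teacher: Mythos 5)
Your proposal is correct and takes essentially the same route as the paper's proof: both combine Lemma~\ref{odd-even} and Lemma~\ref{Joyner_bound} with the identity $\left| S \cup Odd(S) \right| + \left| S \cup Even(S) \right| = p + |S|$ to bound the smaller of the two quantities, and both handle large $|S|$ by falling back on the trivial bound $\left| S \cup Odd(S) \right| \geq |S|$. The only differences are cosmetic: you place the case split at the exact crossing point $\sqrt{p} - \frac{1}{\sqrt{p}+1}$ rather than at $\sqrt{p}$ (yielding a marginally sharper intermediate constant before rounding down to $\sqrt{p}-\frac{1}{2}$), and you spell out the sum identity that the paper merely asserts.
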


\begin{proof}
We consider the case $\left| S \cup Odd(S) \right| \leq \left| S \cup Even(S) \right|$, the other case can be treated a similar way.
Lemma \ref{odd-even} states that $\left| S \cup Odd(S) \right| - \left| S \cup Even(S) \right| = -\left| \sum_{i \in \mathbb{F}_p} \chi_L(f_S(i)) \right|$.
On the other hand, the equality $\left| S \cup Odd(S) \right| + \left| S \cup Even(S) \right| = p+|S|$ is always true.
Thus adding both equalities, $p + |S| - \left| \sum_{i \in \mathbb{F}_p} \chi_L(f_S(i)) \right| = 2 \left| S \cup Odd(S) \right|$.
Thanks to Lemma \ref{Joyner_bound}, we derive $p+|S|-(|S|-1)\sqrt{p}-1 \leq 2 \left| S \cup Odd(S) \right|$.

If $|S| \leq \sqrt{p}$ then the left-hand side of the previous inequality can be bounded: $p+|S|-(|S|-1)\sqrt{p}-1 = p+|S|(1-\sqrt{p})+\sqrt{p}-1 \geq  2 \sqrt{p} - 1$.
Thus, $\sqrt{p} - \frac{1}{2} \leq \left| S \cup Odd(S) \right|$, otherwise $|S| > \sqrt{p}$ and the previous inequality is obviously true.
\hfill $\Box$
\end{proof}
{\bf Proof of Theorem \ref{delta_loc_Paley}:}
The characterization given by Property \ref{char_delta_loc} and the bounds on the size of sets of the form $D \cup Odd(D)$ obtained in Lemma \ref{bound_odd_even} imply that the local minimum degree for Paley graphs is greater than the square root of the order of the graph.
This ends the proof of Theorem \ref{delta_loc_Paley}.

It is significant and interesting to notice that the conjecture of the existence of an infinite family of Paley graphs with linear $\delta_{loc}$ is equivalent to the Bazzi-Mitter conjecture \cite{Bazzi-Mitter}.
However, it is already known that not all Paley graphs have a linear $\delta_{loc}$: there exists no $p_0 \in \mathbb{N}$ such that for all $p>p_0$, $\delta_{loc}(Pal_p)$ is linear in $p$ thanks to Theorem $7$ of \cite{JoynerHEC}.

\section{Existence of graphs with linear local minimum degree}

In this section, we give a proof of the existence of bipartite graphs for which the local minimum degree is linear in the order of the graph.
The proof uses the asymmetric version of Lov\'asz Local Lemma \cite{LLL}:

\begin{lemma}[Asymmetric Lov\'asz Local Lemma]
\label{LLL}
Let $\mathcal{A} = \{A_1, \cdots, A_n\}$ be a set of ÒbadÓ events in an arbitrary probability space and let $\Gamma(A)$ denote a subset of $\mathcal{A}$ such that $A$ is independent from all the events outside $A$ and $\Gamma(A)$.
If for all $A_i$ there exists $\sigma(A_i) \in [0,1)$ such that
$Pr(A_i) \leq \sigma(A_i) \prod_{B_j \in \Gamma(A_i)} (1-\sigma(B_j))$
then we have
$Pr(\overline{A_1}, \cdots, \overline{A_n}) \geq \prod_{A_j \in \mathcal{A}} (1-\sigma(A_j))$.
\end{lemma}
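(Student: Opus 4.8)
The plan is to reduce the statement to a single localized estimate and then recover the conclusion by a telescoping product. Concretely, I would prove by induction on $|J|$ that
\[
Pr\left(A_i \mid \bigcap_{j \in J} \overline{A_j}\right) \leq \sigma(A_i)
\]
for every index $i$ and every subset $J \subseteq \{1,\ldots,n\} \setminus \{i\}$ for which the conditioning event has positive probability. The base case $|J| = 0$ is immediate: the hypothesis gives $Pr(A_i) \leq \sigma(A_i) \prod_{B \in \Gamma(A_i)} (1 - \sigma(B)) \leq \sigma(A_i)$, since every factor lies in $(0,1]$.

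For the inductive step I would split $J$ according to the dependency structure of $A_i$: set $J_1 = \{j \in J : A_j \in \Gamma(A_i)\}$ and $J_2 = J \setminus J_1$, so that $A_i$ is independent of the family $\{A_j\}_{j \in J_2}$. Writing the conditional probability as a ratio,
\[
Pr\left(A_i \mid \bigcap_{j \in J} \overline{A_j}\right) = \frac{Pr\left(A_i \cap \bigcap_{j \in J_1} \overline{A_j} \mid \bigcap_{j \in J_2}\overline{A_j}\right)}{Pr\left(\bigcap_{j \in J_1}\overline{A_j} \mid \bigcap_{j \in J_2}\overline{A_j}\right)},
\]
I would bound numerator and denominator separately. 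The numerator is at most $Pr(A_i \mid \bigcap_{j \in J_2}\overline{A_j}) = Pr(A_i)$, where the inequality simply drops the intersection with the $J_1$-events and the equality uses independence of $A_i$ from the $J_2$-events; by hypothesis this is at most $\sigma(A_i)\prod_{B \in \Gamma(A_i)}(1-\sigma(B))$. For the denominator I would expand it as a product of successive conditional probabilities over the events indexed by $J_1$; each such factor conditions on strictly fewer than $|J|$ of the $\overline{A_j}$, so the induction hypothesis applies and bounds it from below by $\prod_{j \in J_1}(1 - \sigma(A_j)) \geq \prod_{B \in \Gamma(A_i)}(1-\sigma(B))$, the last step using $J_1 \subseteq \Gamma(A_i)$. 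Dividing, the two products over $\Gamma(A_i)$ cancel and leave exactly $\sigma(A_i)$, which closes the induction.

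With the localized estimate established, the conclusion follows from the chain rule applied in any fixed order:
\[
Pr\left(\bigcap_{i=1}^n \overline{A_i}\right) = \prod_{i=1}^n \left(1 - Pr\left(A_i \mid \bigcap_{j<i}\overline{A_j}\right)\right) \geq \prod_{i=1}^n (1 - \sigma(A_i)),
\]
where each conditional probability is controlled by the estimate just proved (one should note along the way that the conditioning events remain of positive probability, which is guaranteed since each factor $1 - Pr(A_i \mid \cdots)$ stays positive).

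I expect the main obstacle to be the bookkeeping in the inductive step rather than any deep probabilistic content. The delicate points are verifying that every conditional probability arising in the expansion of the denominator really conditions on fewer than $|J|$ events, so that the induction hypothesis is legitimately applicable, and invoking the independence of $A_i$ from the $J_2$-events in exactly the right place. Once the split of $J$ along $\Gamma(A_i)$ is set up and the surviving factors are matched to the hypothesis $Pr(A_i) \leq \sigma(A_i)\prod_{B \in \Gamma(A_i)}(1-\sigma(B))$, the remainder is routine manipulation of conditional probabilities.
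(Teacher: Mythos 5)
The paper does not prove this lemma at all: it is stated as a known result, cited to Lov\'asz's original paper, and used as a black box. So there is no in-paper argument to compare against; what you have written is the standard inductive proof (the one found in Alon--Spencer), and it is correct. The induction on $|J|$ establishing $Pr\left(A_i \mid \bigcap_{j \in J} \overline{A_j}\right) \leq \sigma(A_i)$, the split of $J$ into $J_1 \subseteq \Gamma(A_i)$ and $J_2$, the bound on the numerator via independence and the hypothesis, the expansion of the denominator into conditional factors each covered by the induction hypothesis (each conditions on at most $|J_2| + |J_1| - 1 < |J|$ events), the monotonicity step $\prod_{j \in J_1}(1-\sigma(A_j)) \geq \prod_{B \in \Gamma(A_i)}(1-\sigma(B))$, and the final telescoping by the chain rule are all sound. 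Two small points are worth making explicit. First, the case $J_1 = \varnothing$ should be noted separately: there the denominator is an empty product and the claim follows directly from independence and the base-case estimate. Second, your argument needs the hypothesis to be read as \emph{mutual} independence of $A_i$ from the collection of events outside $\{A_i\} \cup \Gamma(A_i)$ --- pairwise independence would not justify the equality $Pr\left(A_i \mid \bigcap_{j \in J_2}\overline{A_j}\right) = Pr(A_i)$; your phrase ``independent of the family'' is exactly the right reading, and it is also the intended (if loosely worded) meaning of the statement in the paper.
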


We apply the Local Lov\'asz Lemma (Lemma \ref{LLL}) on random bipartite graphs to show the existence of bipartite graphs with linear local minimum degree.
 \begin{theorem}
 \label{linear_bipartite}
 There exists $\nu_0 \in \mathbb{N}$ such that for all $\nu > \nu_0$ there exists a bipartite graph of order $n=2\nu$ whose local minimum degree is greater than $0.110n$.
 \end{theorem}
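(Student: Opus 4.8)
The plan is to argue by the probabilistic method, on the random bipartite graph $G$ with parts $A$ and $B$, $|A|=|B|=\nu$, whose biadjacency matrix is uniform over $\mathbb{F}_2^{\nu\times\nu}$ (equivalently, each of the $\nu^2$ possible edges is present independently with probability $\tfrac12$). By Property~\ref{char_delta_loc} it is enough to show that, with positive probability, every non-empty $D\subseteq V(G)$ has $|D\cup Odd(D)|>0.110\,n+1$. Accordingly I set, for each non-empty $D$, the bad event $A_D=\{\,|D\cup Odd(D)|\le k\,\}$ with $k=\lceil 0.110\,n\rceil+1$, and aim to apply the asymmetric Lov\'asz Local Lemma (Lemma~\ref{LLL}) to conclude $\Pr(\bigcap_D \overline{A_D})>0$.

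The first step is to estimate $\Pr(A_D)$. Writing $D=D_A\cup D_B$, a vertex $v\in A\setminus D_A$ belongs to $Odd(D)$ iff the number of edges from $v$ to $D_B$ is odd; as these edges are independent fair coins and $D_B\neq\varnothing$, this occurs with probability $\tfrac12$, independently over such $v$, so $|Odd(D)\cap(A\setminus D_A)|\sim\mathrm{Bin}(\nu-|D_A|,\tfrac12)$, and symmetrically on the $B$-side. Two regimes then govern $|D\cup Odd(D)|$: if $D$ meets both parts it equals $|D|+\mathrm{Bin}(n-|D|,\tfrac12)$ (mean $\approx n/2$), whereas if $D\subseteq B$ (or $D\subseteq A$) then $Odd(D)$ is confined to the opposite part and it equals $|D|+\mathrm{Bin}(\nu,\tfrac12)$ (mean only $\approx n/4$). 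In each case $\Pr(A_D)$ is a lower binomial tail, for which I would use $\Pr(\mathrm{Bin}(N,\tfrac12)\le t)\le 2^{-N(1-H(t/N))}$ with $H$ the binary entropy; the bound is non-vacuous only for $|D|\le k$.

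To set up the Local Lemma I would note that $A_D$ depends only on the edges of the cut $E(D,\overline D)$, so $A_D$ and $A_{D'}$ are independent unless these cuts share an edge; counting the edges incident to $D$ and the sets realising a given edge as a cut edge bounds the number of neighbours of $A_D$ of each size $d'$ by $O\!\left(|D|\,\nu\binom{n-1}{d'-1}\right)$. Assigning to each $A_D$ a weight $\sigma(A_D)$ that is an exponential function of $|D|$, comparable to the tail bound above, the fast decay keeps $\sum_{D'\in\Gamma(A_D)}\sigma(A_{D'})$ controlled, and the inequality $\Pr(A_D)\le\sigma(A_D)\prod_{D'\in\Gamma(A_D)}(1-\sigma(A_{D'}))$ reduces, after taking logarithms and writing $a=|D|/\nu$, $c=k/\nu$, to the requirement that each relevant exponent be negative. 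For the two-sided sets this holds with room to spare, but for the one-sided sets it becomes $H(a)+H(c-a)<1$ for all admissible $a$; maximising the concave left side at $a=c/2$ gives the single binding condition $H(c/2)<\tfrac12$.

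The constant is pinned down here: $H(c/2)<\tfrac12$ means $c<2H^{-1}(\tfrac12)$, and since $H(0.110)<\tfrac12$ one has $2H^{-1}(\tfrac12)>0.220$, so the threshold $c=k/\nu=0.220+O(1/\nu)$ coming from $k=\lceil0.110\,n\rceil+1$ is admissible once $\nu$ is large enough to absorb the sub-exponential (polynomial and Stirling) factors; this is exactly the role of $\nu_0$. The main obstacle is precisely this one-sided regime: because a set living in a single part sees only $\nu=n/2$ independent coins, its $|D\cup Odd(D)|$ concentrates near $n/4$ rather than $n/2$, so such sets are far likelier to be small than in the general (non-bipartite) setting and they alone force the weaker constant $0.110$ in place of the $0.189$ available without the bipartite restriction. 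The remaining care lies in choosing $\sigma$ so that the neighbour-weight sum in the Local Lemma stays bounded uniformly in $n$, which is routine once the tail and dependency estimates above are in place.
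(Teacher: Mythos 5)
Your proposal is correct and follows essentially the same route as the paper: the same random bipartite model with edge probability $\frac{1}{2}$, the same asymmetric Lov\'asz Local Lemma with weights inversely proportional to binomial coefficients, the same entropy bounds on binomial lower tails, and the same binding condition coming from one-sided sets, namely that the concave sum of entropies is largest at the symmetric point and must stay below $1$, i.e.\ $H(0.110)\le\frac{1}{2}$, which pins the constant $0.110$. The only cosmetic difference is that the paper discards two-sided sets at the outset, using bipartiteness to note $|D\cup Odd(D)|\ge|(D\cap V_1)\cup Odd(D\cap V_1)|$ and so running the Local Lemma only over one-sided events (with the trivial choice of letting every event depend on every other), whereas you keep all non-empty $D$ as bad events with a cut-based dependency graph and observe that the two-sided ones are non-binding, which is indeed slack by the $0.189$ computation of Theorem~\ref{G_1/2}.
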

 
 \begin{proof}
 Let $G_B$ be a bipartite graph of order $n=2\nu$ with two independent sets of size $\nu$ and where any possible edge exists with probability $\frac{1}{2}$.
 An event which implies that a graph $G$ has a linear $\delta_{loc}$ is: ``$\forall D \subseteq V(G), |D \cup Odd(D)| > cn$" for some $c \in \left] 0,1\right]$.
 In the case of $G_B$, it is sufficient to verify the previous event for sets $D$ such that $D \subseteq V_1$ or $D \subseteq V_2$.
 Indeed, $G_B$ is bipartite, therefore $|D \cup Odd(D)| \geq |(D \cap V_1) \cup Odd(D \cap V_1)|$.
 Therefore we consider the ``bad" events $A^1_D$ and $A^2_D$ defined as follows: if $D \subseteq V_1$ (resp. $V_2$), $A^1_D$ (resp. $A^2_D$) = ``$|D \cup Odd(D)| \leq cn$".
 
 We want to compute $Pr(A^1_D)$ with $D \subseteq V_1$.
Let $|D| = d\nu$ for some $d \in \left] 0,1 \right]$.
 For any $u \in V_2$, $Pr(``u \in Odd(D)") = \frac{1}{2}$.
 Thus, $Pr(|Odd(D)| \leq x) = (\frac{1}{2})^{\nu} \sum_{k=0}^{x} {\nu \choose k} \leq \left( \frac{1}{2} \right)^\nu 2^{\nu H \left( \frac{x}{\nu } \right)}$ where $H : t \mapsto -t\log_2(t)-(1-t)\log_2(1-t)$ is the binary entropy function.
 Then, $Pr(A^1_D) = Pr(``|D \cup Odd(D)| \leq cn") = Pr(``|D| + |Odd(D)| \leq cn") = Pr(``|Odd(D)| \leq cn - |D|") \leq 2^{\nu  \left( H(2c-d)-1 \right)}$.
 
  Let $\sigma(A^1_D) = \frac{1}{r{\nu  \choose d\nu }}$ for some $r \in \mathbb{R}$ that will be chosen later.
 First, we verify that $Pr(A^1_D) \leq \sigma(A^1_D) \prod_{D' \in V_1, D'' \in V_2} (1 - \sigma(A^1_{D'}))(1 - \sigma(A^2_{D''}))$.
 The product of the right-hand side of the previous equation can be written $p = \prod_{|D'| = 1}^\nu  \left( 1 - \frac{1}{r{\nu  \choose |D'|}} \right)^{2{\nu  \choose |D'|}} = \left[ \prod_{|D'| = 1}^\nu  \left( 1 - \frac{1}{r{\nu  \choose |D'|}} \right)^{r{\nu  \choose |D'|}} \right]^{\frac{2}{r}}$.
 The function $f : x \mapsto \left( 1-\frac{1}{x} \right)^x$ verifies $f(x) \geq \frac{1}{4}$ when $x \geq 2$, therefore $p \geq \left( \frac{1}{4} \right)^{\nu *\frac{2}{r}} = 2^{-\frac{4\nu }{r}}$ for any $r \geq 2$.
 Thus, it is sufficient to have $2^{\nu  \left( H(2c-d)-1 \right)} \leq \frac{1}{r{\nu  \choose d\nu }} 2^{-\frac{4\nu }{r}}$.
 Rewriting this inequality gives $r{\nu  \choose d\nu } 2^{(2c-1)\nu -d\nu +\frac{4\nu }{r}} \leq 1$.
 Thanks to the bound ${\nu  \choose d\nu } \leq 2^{\nu H\left( \frac{d\nu }{\nu } \right)}$ and after applying the logarithm function and dividing by $\nu $, it is sufficient that $\frac{\log_2{r}}{\nu } + H(d)+H(2c-d)-1+\frac{4}{r} \leq 0$.
 Therefore, if we take $r=\nu $ and $\nu  \to +\infty$, the asymptotic condition on the value of $c$ is $H(d)+H(2c-d)-1 \leq 0$.
 Since this bound must be verified for all $d \in \left( 0,1 \right]$, it must be true for the value of $d$ for which the function $d \mapsto H(d)+H(2c-d)-1$ is minimum.
 Usual techniques show that the minimum is reached for $d = c$, and a numerical analysis shows that $c=0.110$ satisfies the condition $Pr(A^1_D) \leq \sigma(A^1_D) p$ for some $r \in \mathbb{R}$ and $\nu  > \nu_0$.
 A similar reasoning is used to prove $Pr(A^2_D) \leq \sigma(A^2_D) p$ for all $D \in V_2$.
 
 The conditions and the choice of the weights $\sigma(A^1_D)$ and $\sigma(A^1_D)$ allow us to use the Lov\'asz Local Lemma (Lemma \ref{LLL}), and we derive $Pr \left( \big\{ \overline{A^1_D}\ \big|\ D \in V_1 \big\}, \big\{\ \overline{A^2_D}\ \big|\ D \in V_2 \big\} \right) \geq p > 0$, which proves that $Pr \left(\delta_{loc}(G_B) \geq cn \right) > 0$ for any $c \leq 0.110$ and for $\nu>\nu_0$.
 Then there exists at least one bipartite graph $G_B$ of order $n$ such that $\delta_{loc}(G_B) \geq 0.110n$.
 \hfill $\Box$
 \end{proof}

The general case of a random graph without the bipartite constraint leads to a slightly better constant:

 \begin{theorem}
 \label{G_1/2}
 There exists $n_0 \in \mathbb{N}$ such that for all $n > n_0$ there exists a graph of order $n$ whose local minimum degree is greater than $0.189n$.
 \end{theorem}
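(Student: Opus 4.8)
The plan is to rerun the probabilistic argument of Theorem~\ref{linear_bipartite}, but now on the random graph $G$ on a single vertex set $V$ of size $n$ in which each of the $\binom{n}{2}$ possible edges is present independently with probability $\frac12$. For every non-empty $D\subseteq V$ I would introduce the bad event $A_D=``|D\cup Odd(D)|\le cn"$. By Property~\ref{char_delta_loc}, if none of the $A_D$ occur then $\delta_{loc}(G)>cn$, so it suffices to exhibit a constant $c$, as large as possible, for which the probability that all the $A_D$ simultaneously fail is positive.

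The first step is to compute $Pr(A_D)$. Writing $|D\cup Odd(D)|=|D|+|Odd(D)\setminus D|$, only the vertices outside $D$ matter. For a fixed $v\notin D$ the membership $v\in Odd(D)$ is decided by the $|D|$ potential edges from $v$ to $D$, each present independently with probability $\frac12$; since the parity of a non-empty sum of fair independent bits is uniform, $Pr(v\in Odd(D))=\frac12$. The crucial observation is that for distinct $v,v'\notin D$ these events are governed by disjoint edge sets and are hence independent, so with $|D|=dn$ the count $|Odd(D)\setminus D|$ follows the law $\mathrm{Bin}\big((1-d)n,\tfrac12\big)$. (The dependencies among vertices inside $D$ never enter, which is precisely what lets the single-part computation succeed.) Expressing the requirement as a binomial lower tail and bounding it with the binary entropy function $H$ exactly as in the bipartite proof yields $Pr(A_D)\le 2^{(1-d)n\left(H\left(\frac{c-d}{1-d}\right)-1\right)}$ for $d\le c$, while $Pr(A_D)=0$ when $d>c$; the entropy estimate is legitimate because $\frac{c-d}{1-d}\le\frac12$ whenever $c\le\frac12$.

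Next I would invoke the asymmetric Lov\'asz Local Lemma (Lemma~\ref{LLL}) with the same bookkeeping as before: assign the weight $\sigma(A_D)=\frac{1}{r\binom{n}{dn}}$ and take the conservative dependency neighbourhood $\Gamma(A_D)$ to be all the other bad events. The product $p=\prod_{D'}(1-\sigma(A_{D'}))=\prod_{k=1}^{n}\left(1-\frac{1}{r\binom{n}{k}}\right)^{\binom{n}{k}}$ is bounded below through $f(x)=(1-\frac1x)^x\ge\frac14$ for $x\ge2$, giving $p\ge 2^{-2n/r}$ for every $r\ge2$. Substituting into the hypothesis $Pr(A_D)\le\sigma(A_D)\,p$, using $\binom{n}{dn}\le 2^{nH(d)}$, taking logarithms, dividing by $n$, and finally setting $r=n\to\infty$ collapses the entire family of conditions into the single asymptotic inequality $H(d)+(1-d)H\left(\frac{c-d}{1-d}\right)-(1-d)\le 0$, which must hold for every $d\in(0,c]$.

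The only real obstacle is the remaining optimization: I need the largest $c$ for which the function $d\mapsto H(d)+(1-d)H\left(\frac{c-d}{1-d}\right)-(1-d)$ stays non-positive throughout $(0,c]$, i.e.\ for which its maximum over $d$ does not exceed $0$. This is the analogue of the bipartite threshold $H(c)\le\frac12$, but the nested entropy term means the worst-case $d$ is now interior (no longer $d=c$) and no closed form is available; a numerical analysis of the critical point shows that the inequality holds exactly up to $c=0.189$. Lemma~\ref{LLL} then gives $Pr(\delta_{loc}(G)>cn)>0$ for this $c$, so for all sufficiently large $n$ there is a graph on $n$ vertices with $\delta_{loc}>0.189n$, which is Theorem~\ref{G_1/2}.
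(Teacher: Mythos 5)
Your proposal is correct and follows essentially the same route as the paper's appendix proof: the same $G(n,\tfrac12)$ model, the same bad events $A_D$ with the binomial-tail/entropy bound $Pr(A_D)\le 2^{(1-d)n\left(H\left(\frac{c-d}{1-d}\right)-1\right)}$, the same Lov\'asz Local Lemma weights $\sigma(A_D)=\frac{1}{r\binom{n}{dn}}$ with $r=n$, and the same asymptotic condition $H(d)+(1-d)H\left(\frac{c-d}{1-d}\right)-(1-d)\le 0$ settled numerically at $c=0.189$. The only deviations (taking the product $p$ over all set sizes rather than only $|D'|\le cn$, and handling $|D|>cn$ via $Pr(A_D)=0$ instead of excluding those events) are immaterial to the asymptotics.
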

 
 Due to its similarity to the above proof, the proof of this theorem is given in Appendix.

\section{NP-completeness of the local minimum degree problem}

In this section, we show that given a graph $G$ and an integer $d$, deciding whether $\delta_{loc}(G) \leq d$ is NP-complete even for the family of bipartite graphs.
This result is established through a reduction to the problem of the shortest word of a linear code \cite{shortest_codeword_NPC} and uses the families of graphs whose existence has been proven in the previous section.

\begin{lemma}
\label{delta_loc_biparti}
Let $G=(V,E)$ be a bipartite graph.
Let $V=V_1 \cup V_2$ where $V_1$ and $V_2$ are the two parties of the graph $G$.
There exists $D_0 \subseteq V$ such that $\delta_{loc}(G) + 1 = |D_0 \cup Odd(D_0)|$ and $D_0 \subseteq V_1$ or $D_0 \subseteq V_2$.
\end{lemma}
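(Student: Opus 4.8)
The plan is to start from the characterization of $\delta_{loc}$ in Property~\ref{char_delta_loc}: fix a nonempty set $D$ attaining the minimum, so that $|D \cup Odd(D)| = \delta_{loc}(G) + 1$, and decompose it along the bipartition as the disjoint union $D = D_1 \cup D_2$ with $D_1 = D \cap V_1$ and $D_2 = D \cap V_2$. The goal is to show that one of the two one-sided restrictions, $D_1$ or $D_2$, already attains the same value, so that it can serve as $D_0$.

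The key step is to exploit bipartiteness to split $Odd(D)$ along the bipartition. For any $v \in V_1$ all neighbours of $v$ lie in $V_2$, hence $\mathcal{N}(v) \cap D = \mathcal{N}(v) \cap D_2$, and symmetrically for $v \in V_2$. Consequently $Odd(D) \cap V_1 = Odd(D_2)$ and $Odd(D) \cap V_2 = Odd(D_1)$, where I also use that $Odd(D_2) \subseteq V_1$ and $Odd(D_1) \subseteq V_2$ (again by bipartiteness, since a vertex on one side has no neighbour in the same-side set $D_i$). Intersecting $D \cup Odd(D)$ with each side and using that $V_1$ and $V_2$ are disjoint then yields the identity $|D \cup Odd(D)| = |D_1 \cup Odd(D_2)| + |D_2 \cup Odd(D_1)|$.

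From this I would derive the monotonicity inequality $|D_1 \cup Odd(D_1)| \leq |D \cup Odd(D)|$. Indeed $D_1 \subseteq D_1 \cup Odd(D_2)$ gives $|D_1 \cup Odd(D_2)| \geq |D_1|$, while $Odd(D_1) \subseteq D_2 \cup Odd(D_1)$ gives $|D_2 \cup Odd(D_1)| \geq |Odd(D_1)|$; since $D_1 \subseteq V_1$ and $Odd(D_1) \subseteq V_2$ are disjoint, $|D_1| + |Odd(D_1)| = |D_1 \cup Odd(D_1)|$, and combining these with the identity above gives the claim. The symmetric argument gives $|D_2 \cup Odd(D_2)| \leq |D \cup Odd(D)|$.

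To conclude I distinguish cases. If $D_1 = \varnothing$ then $D = D_2 \subseteq V_2$ is already one-sided, and likewise if $D_2 = \varnothing$; in either case $D_0 = D$ works. Otherwise both $D_1$ and $D_2$ are nonempty, hence admissible in the minimisation of Property~\ref{char_delta_loc}, so $|D_1 \cup Odd(D_1)| \geq \delta_{loc}(G) + 1$; together with the monotonicity inequality and $|D \cup Odd(D)| = \delta_{loc}(G)+1$ this forces $|D_1 \cup Odd(D_1)| = \delta_{loc}(G) + 1$, and $D_0 = D_1 \subseteq V_1$ is the desired set. The only real subtlety---more bookkeeping than genuine obstacle---is tracking which side each of $D_i$ and $Odd(D_i)$ lives on so that the relevant unions are genuinely disjoint, and making sure the retained candidate is nonempty so that it remains admissible in the minimisation.
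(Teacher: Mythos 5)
Your proof is correct and takes essentially the same route as the paper: decompose $D$ along the bipartition, use bipartiteness to split $Odd(D)$ as $Odd(D_1) \cup Odd(D_2)$ with $Odd(D_i)$ on the opposite side, and then sandwich $|D_1 \cup Odd(D_1)|$ between $\delta_{loc}(G)+1$ (admissibility of nonempty $D_1$) and $|D \cup Odd(D)| = \delta_{loc}(G)+1$. The only difference is cosmetic: the monotonicity inequality you derive via the counting identity $|D \cup Odd(D)| = |D_1 \cup Odd(D_2)| + |D_2 \cup Odd(D_1)|$ follows immediately in the paper from the inclusion $D_1 \cup Odd(D_1) \subseteq D \cup Odd(D)$.
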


\begin{proof}
Let $D \subseteq V$ such that $|D \cup Odd(D)| = \delta_{loc}(G) + 1$.
We write $D = D_1 \cup D_2$ with $D_1 \subseteq V_1$ and $D_2 \subseteq V_2$.
$D \neq \varnothing$, then without loss of generality, we assume that $D_1 \neq \varnothing$.
$G$ is bipartite, then $Odd(D_1) \subseteq V_2$ and $Odd(D_2) \subseteq V_1$.
Thus $Odd(D_1 \cup D_2) = Odd(D_1) \cup Odd(D_2)$, and $\delta_{loc}(G) + 1 = |D \cup Odd(D)| = |D_1 \cup Odd(D_1) \cup D_2 \cup Odd(D_2)| \geq |D_1 \cup Odd(D_1)| \geq \delta_{loc}(G) + 1$.
The bounds are tight, therefore $|D_1 \cup Odd(D_1)| + 1 = \delta_{loc}(G)$.
\hfill $\Box$
\end{proof}

\begin{theorem}
\label{delta_loc_NPC}
Given a graph $G$ and an integer $d$, deciding whether $\delta_{loc}(G) \leq d$ is NP-complete for the family of bipartite graphs.
\end{theorem}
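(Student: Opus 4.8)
The plan is to show both that the problem lies in NP and that it is NP-hard, the latter by a polynomial reduction from the shortest-codeword problem \cite{shortest_codeword_NPC}. Membership in NP is immediate from Property \ref{char_delta_loc}: a nonempty set $D$ with $|D\cup Odd(D)|\le d+1$ is a polynomial-size certificate, and it can be checked in polynomial time since $Odd(D)$, hence $|D\cup Odd(D)|$, is computed by counting for each vertex the parity of its neighbours lying in $D$.

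For hardness I would first make explicit the dictionary between single-part sets of a bipartite graph and codewords of a systematic linear code. Let $G$ be bipartite with parts $V_1,V_2$ and let $A$ be the biadjacency matrix, so that for $D\subseteq V_1$ with indicator vector $x$ one has $Odd(D)\subseteq V_2$ with indicator $Ax$ over $\mathbb{F}_2$; hence $|D\cup Odd(D)|=\mathrm{wt}(x)+\mathrm{wt}(Ax)$, which is exactly the Hamming weight of the codeword $(x,Ax)$ of the systematic code with generator $[\,I\mid A^{T}]$. As $x$ ranges over nonzero vectors the pair $(x,Ax)$ ranges injectively over the nonzero codewords, so $\min_{\emptyset\ne D\subseteq V_1}|D\cup Odd(D)|$ equals the minimum distance of that code. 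Since every binary linear code is systematic after a coordinate permutation (which preserves all weights), I would, given an instance $(M,w)$, Gaussian-eliminate $M$ into a basis in systematic form $[\,I_k\mid P]$ and build a ``code gadget'' $G_0$ on $V_1$ ($k$ vertices) and $V_2$ ($n-k$ vertices) whose biadjacency realises $P$, so that the $V_1$-direction reproduces $d_{\min}(C)$ exactly.

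The obstacle is that, by Lemma \ref{delta_loc_biparti}, $\delta_{loc}(G_0)+1$ is the minimum over single-part sets on \emph{both} sides, and the $V_2$-direction computes the minimum distance of the transpose code $\{(Py,y)\}$, which may be far smaller than $d_{\min}(C)$; indeed a single parity vertex already gives $|D\cup Odd(D)|\le 1+\deg$, so $\delta_{loc}(G_0)$ cannot be read off as $d_{\min}(C)-1$. To neutralise this second direction I would invoke Theorem \ref{linear_bipartite} to take a bipartite graph $\Gamma$ with $\delta_{loc}(\Gamma)>n$; because $\delta_{loc}(\Gamma)$ is linear in the order of $\Gamma$, such a $\Gamma$ has size polynomial in the instance, and its defining property is precisely that it contains \emph{no} cheap nonempty subset. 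This is the reason a naive penalising device fails: attaching pendant vertices, for instance, would itself create sets of size $2$, whereas $\Gamma$ introduces no new small configuration. I would then attach $\Gamma$ to $G_0$ so that every nonempty single-part set meeting $V_2$ or $\Gamma$ is forced to pay more than $n\ge w$, while sets $D\subseteq V_1$ keep interacting only with $V_2$ and still realise $d_{\min}(C)$; setting the threshold $d:=w-1$ would then give $\delta_{loc}(G)\le d$ if and only if $d_{\min}(C)\le w$.

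The heart of the argument, and the step I expect to be the main obstacle, is the correctness of this attachment. One must verify, using Lemma \ref{delta_loc_biparti} to restrict to single-part sets and the large-$\delta_{loc}$ property of $\Gamma$ to bound any set touching the protective part, that the global minimum of $|D\cup Odd(D)|$ is always attained in the $V_1$-direction whenever $d_{\min}(C)\le n$. The delicate point is controlling $\mathbb{F}_2$-cancellations: a set straddling $V_2$ and $\Gamma$ could try to use $\Gamma$'s vertices to cancel the penalty imposed on $V_2$, so the connection between the gadget and $\Gamma$ must be designed so that no such cancellation can drive the cost back below $n$. A secondary point to address is that Theorem \ref{linear_bipartite} is stated as an existence result, so the reduction must actually produce $\Gamma$ (for example through a constructive form of the Lov\'asz Local Lemma), which at worst yields NP-hardness under randomised reductions. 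Once the attachment is shown correct, the construction is plainly polynomial, and together with membership in NP this establishes NP-completeness even on bipartite graphs.
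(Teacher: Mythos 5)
Your proposal follows the paper's strategy exactly in outline: NP membership via the certificate from Property \ref{char_delta_loc}, hardness by reduction from the shortest-codeword problem \cite{shortest_codeword_NPC} using a systematic generator matrix, Lemma \ref{delta_loc_biparti} to restrict attention to single-part sets, and Theorem \ref{linear_bipartite} to neutralise the ``wrong'' (check-side) direction. However, the step you yourself flag as ``the main obstacle'' --- how to wire the protective graph into the code gadget so that no $\mathbb{F}_2$-cancellation lets a set meeting $V_2$ or the protective part cost less than $n$ --- is precisely the heart of the hardness proof, and you never supply it. As described (``attach $\Gamma$ to $G_0$'', one copy of a large-$\delta_{loc}$ graph glued on somehow), there is nothing whose correctness could even be checked, and indeed a single attached copy is not what works. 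This is a genuine gap, not a routine verification.

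The paper's resolution is a substitution construction rather than an attachment: it takes $n$ \emph{pairwise disjoint} copies of $G_B$, one per check coordinate $x' \in V_{A'_2}$, places each copy's $V_{B_1}$-side in $V_2$ and its $V_{B_2}$-side in $V_1$ (this is $V_{1R}$), and connects the $k$ message vertices $V_{1L}$ only to one distinguished vertex $u$ inside each copy's $V_{B_1}$-side, according to $A'$. The cancellation you worry about is then impossible for structural reasons: by Lemma \ref{delta_loc_biparti} the optimal $D_0$ lies entirely in $V_1$ or entirely in $V_2$; if $D_0$ meets $V_2$ (resp.\ $V_{1R}$), then inside some copy of $G_B$ the odd neighbourhood of $D_0$ on the opposite side of that copy depends only on $D_0$'s intersection with that copy, because distinct copies share no vertices and the only cross edges go through $V_{1L}$, which lies in the other part. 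Hence $|D_0 \cup Odd(D_0)|$ is at least roughly $\delta_{loc}(G_B) > n+1 \geq \delta(G)+1 \geq \delta_{loc}(G)+1$, a contradiction; so $D_0 \subseteq V_{1L}$, where $|D_0 \cup Odd(D_0)| = w(X_{D_0}) + w(A'X_{D_0})$ and the reduction computes $\min_{X \neq 0} w(AX)$ exactly. One further remark: on your secondary concern about constructiveness, the paper does not derandomise via a constructive Lov\'asz Local Lemma; it simply accepts that the reduction is proven to \emph{exist} without being exhibited (this is stated in the conclusion), and notes that a constructive reduction for general, non-bipartite graphs can be obtained by using Paley graphs in place of $G_B$.
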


\begin{proof}
The problem is in NP since a set of the form $D \cup Odd(D)$ with $D \neq \varnothing$ and $ |D \cup Odd(D)| = \delta_{loc}$ is a $YES$ certificate.
We do a reduction to the problem of the shortest codeword.
Let $A \in \mathcal{M}_{n+k,k}(\mathbb{F}_2)$ be the generating matrix of a binary code.
Using oracle for the problem related to the quantity $\delta_{loc}$ on bipartite graphs, we answer the problem of finding the shortest word of $A$.

If $dim(Ker(A)) \neq 0$ then $\min_{X \in \mathbb{F}_2^k, X \neq 0} \{ w(AX) \}$ = $0$, where $w$ is the Hamming weight function.
Otherwise, $\min_{X \in \mathbb{F}_2^k, X \neq 0} \{ w(AX) \} = \min_{X \in \mathbb{F}_2^k, X \neq 0} \{ w(X) + w(A'X) \}$ where $A$ is written in the form
$
\left(
   \begin{matrix}
      I_k \\
      A' \\
   \end{matrix}
\right)
$.
Thus, $A'$ is of size $n \times k$.

We want to construct a bipartite graph $G$ (Figure \ref{compose_bipartites}) on which the oracle call is performed.
To this purpose, we build two auxiliary graphs $G_{A'}$ and $G_B$ in a first time.
Let $G_{A'} = (V_{A'_1} \cup V_{A'_2}, E_{A'})$ be the bipartite graph defined as follows: the sets $V_{A'_1}$ of size $k$ and $V_{A'_2}$ of size $n$ denote both sides of the bipartition of $G_{A'}$, and for all $x \in V_{A'_1}$ and $x' \in V_{A'_2}$, $(x,x') \in E_{A'}$ if and only if $A'_{x',x} = 1$.
After that, thanks to Theorem \ref{linear_bipartite}, there exists $n_0 \in \mathbb{N}$ such that for all $n > n_0$ there exists a bipartite graph $G_B = (V_{B_1} \cup V_{B_2}, E_B)$ of order $10(n+1)$ such that $\delta_{loc}(G_B) > n+1$.
The sets $V_{B_1}$ and $V_{B_2}$ denote both sides of the bipartition of $G_B$.
Let $u$ be any vertex of $V_{B_1}$.
Consider the bipartite graph $G=(V_1 \cup V_2,E)$ (Figure \ref{compose_bipartites}) defined as follows: $V_1 = V_{1L} \cup V_{1R}$ with $V_{1L} = V_{A'_1} \times \left\{ u \right\}$ and $V_{1R} = V_{A'_2} \times V_{B_2}$, and $V_2 = V_{A'_2} \times V_{B_1}$.
For all $(x,y) \in V_1$ and $(x',y') \in V_2$, $\big( (x,y) , (x',y') \big) \in E$ if and only if $\big( (x,x') \in E_{A'} \wedge y = y' \big) \vee \big( (y,y') \in E_B \wedge x = x' \big)$.

Both independent sets $V_1$ and $V_2$ form a partition of the vertices of the graph.
Thanks to Lemma \ref{delta_loc_biparti}, there exists a non-empty set $D_0 \subseteq V(G)$ such that $\delta_{loc}(G) + 1 = |D_0 \cup Odd(D_0)|$ and $D_0 \subseteq V_1$ or $D_0 \subseteq V_2$.

Suppose that $D_0 \subseteq V_2$.
Therefore $\delta_{loc}(G) = |D_0 \cup Odd(D_0)| - 1 \geq \delta_{loc}(G_B) > n+1 \geq \delta(G)+1 \geq  \delta_{loc}(G)$.
This leads to a contradiction, therefore $D_0 \subseteq V_1$.

Suppose that $D_0 \cap V_{1R} \neq \varnothing$.
Let $v \in D_0 \cap V_{1R}$.
Then $\delta_{loc}(G) = |D_0 \cup Odd(D_0)| - 1 \geq |\{v\} \cup Odd(\{v\})| - 1 \geq \delta_{loc}(G_B) > n+1 \geq \delta(G)+1 \geq  \delta_{loc}(G)$.
This also leads to a contradiction, therefore $D_0 \subseteq V_{1L}$.

The reader will notice that since $D_0 \subseteq V_{1L}$, $|Odd(D_0)|$ in the graph $G$ can be written $w(A'X_{D_0})$ where $X_{D_0}$ is the vector representation of the set $D_0$.
Moreover, since $V_{1L}$ is an independent set, $|D_0 \cup Odd(D_0)| = |D_0| + |Odd(D_0)| = w(X_{D_0}) + w(A'X_{D_0})$.
By definition of $D_0$, we have $\delta_{loc}(G) + 1 = \min_{X \in \mathbb{F}_2^k, X \neq 0} \{ w(AX) \}$, which ends the reduction to the shortest codeword problem which is NP-complete \cite{shortest_codeword_NPC}.
\hfill $\Box$
\end{proof}

Notice that a constructive version of $NP$-completeness on non-necessarily bipartite graphs can be done by replacing the graph $G_B$ by a Paley graph in the above reduction.

\begin{figure}
\begin{center}
\begin{tikzpicture}
	\draw (0,0) ellipse (2mm and 1cm);
	\draw (2,0) ellipse (3mm and 2.2cm);
	\draw (0.1,1) -- (1.8,1.9) --  (0.1,-1) -- (1.8,-1.9) -- cycle;
	\draw (1.8,2) -- (2,1.8) -- (3,2.8) -- (2.8,3) -- cycle;
	\draw (3.8,2) -- (4,1.8) -- (5,2.8) -- (4.8,3) -- cycle;
	\draw (3.1,2.8) -- (4.5,2.8) --  (2.3,2) -- (3.7,2) -- cycle;
	\draw (1.8,-2) -- (2,-2.2) -- (3,-1.2) -- (2.8,-1) -- cycle;
	\draw (3.8,-2) -- (4,-2.2) -- (5,-1.2) -- (4.8,-1) -- cycle;
	\draw (3.1,-1.2) -- (4.5,-1.2) --  (2.3,-2) -- (3.7,-2) -- cycle;
	\draw (-1,0) node{{\Large $G_{A'}$}} (5.5,2.3) node{{\Large $G_B$}} (5.5,-1.7) node{{\Large $G_B$}};
	\draw (2,2)  node{$\bullet$} (2,1.4)  node{$\vdots$} (2,-1.2)  node{$\vdots$} (2,-2) node{$\bullet$} (5.5,1.6) node{$\vdots$} (5.5,-0.8) node{$\vdots$};
	\draw (0,2) node{{\Large $V_{1L}$}} (2.2,3.5) node{{\Large $V_2$}} (4.5,3.8) node{{\Large $V_{1R}$}};
\end{tikzpicture}
\end{center}
\caption{Construction of the graph $G$ from the bipartite graph $G_{A'}$ (ellipses) and several copies of the bipartite graph $G_B$ (rectangles). $V_1 = V_{1L} \cup V_{1R}$}
\label{compose_bipartites}
\end{figure}
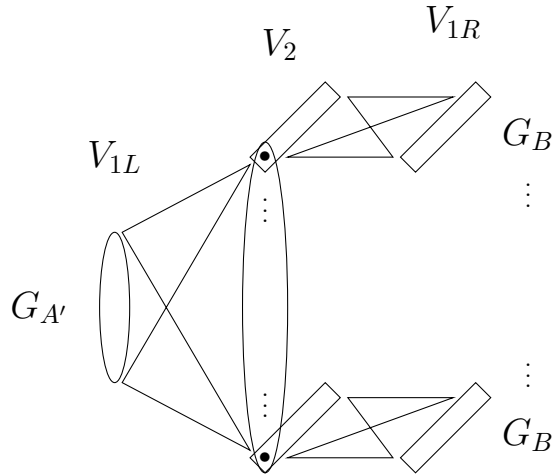

Since finding the local minimum degree is hard, one can wonder whether there exists a $l$-approximation algorithm for this problem for some constant $l$.
The previous reduction also shows that such an algorithm does not exist unless $P=NP$, even for the family of bipartite graphs.

\begin{theorem}
There exists no approximation algorithm with a constant factor for the problem of finding the local minimum degree of bipartite graphs, unless $P=NP$.
\end{theorem}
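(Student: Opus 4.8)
The plan is to reuse the reduction of Theorem~\ref{delta_loc_NPC} verbatim and to observe that it is \emph{approximation-preserving}, so that a constant-factor approximation for $\delta_{loc}$ would yield one for the minimum distance of a binary linear code, a quantity known to be inapproximable. Concretely, suppose for contradiction that for some constant $l \geq 1$ there is a polynomial-time algorithm $\mathcal{A}$ that, on every bipartite graph $G$, returns a value $\tilde\delta(G)$ with $\delta_{loc}(G) \leq \tilde\delta(G) \leq l\,\delta_{loc}(G)$. I would feed $\mathcal{A}$ the graphs produced by the reduction and read off an approximation of the code distance.

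Recall from the proof of Theorem~\ref{delta_loc_NPC} that, starting from a full-rank generator matrix $A=\binom{I_k}{A'}$, the constructed bipartite graph $G$ satisfies the \emph{exact} identity $\delta_{loc}(G)+1 = \min_{X\neq 0} w(AX)$, i.e. $\delta_{loc}(G)+1$ equals the minimum distance $\operatorname{dist}(C_A)$ of the code generated by $A$. Since every binary linear code can be put into systematic form $\binom{I_k}{A'}$ by Gaussian elimination together with a permutation of coordinates, and since a coordinate permutation leaves the minimum distance unchanged, the construction realises codes of \emph{every} minimum distance. Running $\mathcal{A}$ on $G$ and returning $\tilde\delta(G)+1$ therefore gives a value sandwiched between $\operatorname{dist}(C_A)$ and $l\,\delta_{loc}(G)+1 = l\big(\operatorname{dist}(C_A)-1\big)+1 \leq l\,\operatorname{dist}(C_A)$, using $l\geq 1$. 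Hence $\mathcal{A}$ would provide a polynomial-time $l$-approximation of the minimum distance of an arbitrary binary linear code.

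Finally I would invoke the fact that approximating the minimum distance of a binary linear code within any constant factor is NP-hard (Dumer, Micciancio and Sudan), which, together with the approximation-preserving reduction above, contradicts $P\neq NP$ and rules out an $l$-approximation for every constant $l$. The routine part is the bookkeeping of the additive $+1$, which is harmless because $l\geq 1$ and because one may further restrict attention to codes of large minimum distance if a cleaner multiplicative gap is desired. The genuinely hard ingredient --- and the step I would lean on as a black box rather than reprove --- is the constant-factor inapproximability of the minimum distance itself: amplifying the mere additive gap coming from the NP-completeness of the shortest codeword problem \cite{shortest_codeword_NPC} into a constant multiplicative gap cannot be done by naive tensoring (the code length and dimension would blow up exponentially), which is exactly why the deep hardness-of-approximation result is needed. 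One should also be mindful that the strongest ``any constant factor'' statements use randomized reductions, so stating the conclusion under $P\neq NP$ may require restricting $l$ to the range covered by the deterministic hardness results.
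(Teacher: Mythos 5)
Your proposal is correct and follows essentially the same route as the paper: reuse the reduction from Theorem~\ref{delta_loc_NPC} as an approximation-preserving reduction (via the identity $\delta_{loc}(G)+1=\min_{X\neq 0}w(AX)$) and invoke the Dumer--Micciancio--Sudan inapproximability of the minimum distance \cite{approx_min_distance}. You are in fact slightly more careful than the paper, which elides both the additive $+1$ bookkeeping and the caveat that the strongest constant-factor hardness results for minimum distance rely on randomized reductions.
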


\begin{proof}
In the proof of Theorem \ref{delta_loc_NPC}, the value of $\delta_{loc}(G)$ where $G$ is constructed as described in Figure \ref{compose_bipartites} is the same as the shortest word of the linear code described by its generating matrix $A$.
This is true for any $A$, therefore for any constant $l$, any $l$-approximation of $\delta_{loc}(G)$ is a $l$-approximation of the Hamming weight of the shortest word of $A$.
Under the hypothesis $P \neq NP$, since finding the shortest codeword of a linear code is known to have no approximation algorithm with a constant factor \cite{approx_min_distance}, there exists no polynomial approximation algorithm with a constant factor for the problem of finding the local minimum degree of bipartite graphs.
\hfill $\Box$
\end{proof}

\section{Conclusion}

After having shown that the local minimum degree of the family of Paley graphs is greater than the square root of their order, we proved that there exist an infinite family of graphs whose local minimum degree is linear in their order (with constant at least $0.189$ in general and $0.110$ for bipartite graphs).
Then, a study of the computational complexity of the decision problem associated with $\delta_{loc}$ with a polynomial reduction to the problem of the shortest word of a linear code shows its NP-completeness, even on bipartite graphs.
It is also impossible to find an approximation algorithm with any constant factor for this problem, unless $P=NP$.
The specificity of the reduction performed lies in the fact that the construction of an instance for the problem associated with $\delta_{loc}$ uses the existence of a family of bipartite graphs proven above.
Thus, in a way, we proved that a polynomial reduction exists without constructing it explicitly.

Some questions remain open: is it possible to give an explicit family of graphs with linear local minimum degree?
Can we find a constructive proof of $NP$-completeness for the decision problem associated with $\delta_{loc}$ on bipartite graphs?
Can we find an infinite family of Paley graphs whose local minimum degree is linear?
The answer of the last question would provide an answer to the Bazzi-Mitter conjecture \cite{Bazzi-Mitter} on hyperelliptic curves.
 

\begin{thebibliography}{BCG{\etalchar{+}}11}

\bibitem[ADB05]{ent_2_color}
Hans Aschauer, Wolfgang Dur, and Hans~J. Briegel.
\newblock Multiparticle entanglement purification for two-colorable graph
  states.
\newblock {\em Physical Review A}, 71:012319, 2005.

\bibitem[BCG{\etalchar{+}}11]{graph_concatenation}
Salman Beigi, Isaac Chuang, Markus Grassl, Peter Shor, and Bei Zeng.
\newblock Graph concatenation for quantum codes.
\newblock {\em Journal of Mathematical Physics}, 52, 2011.

\bibitem[BFK09]{blind}
Anne Broadbent, Joseph Fitzsimons, and Elham Kashefi.
\newblock Universal blind quantum computation.
\newblock In {\em Proceedings of FOCS}, pages 517--526, 2009.

\bibitem[BM06]{Bazzi-Mitter}
LMJ Bazzi and SK~Mitter.
\newblock Some randomized code constructions from group actions.
\newblock {\em IEEE Transactions on Information Theory}, 52(7):3210--3219,
  2006.

\bibitem[Bou87]{Bouchet_LCeq_poly}
Andr\'e Bouchet.
\newblock Digraph decompositions and eulerian systems.
\newblock {\em SIAM J. Algebraic Discrete Methods}, 8:323--337, July 1987.

\bibitem[Bou90]{Bouchet}
Andr\'e Bouchet.
\newblock {$\kappa$}-transformations, local complementations and switching.
\newblock {\em Cycles and Rays}, 1990.

\bibitem[Bou94]{Bouchet_circle_graphs}
Andr\'{e} Bouchet.
\newblock Circle graph obstructions.
\newblock {\em J. Comb. Theory Ser. B}, 60:107--144, January 1994.

\bibitem[dF81]{HdF}
Hubert de~Fraysseix.
\newblock Local complementation and interlacement graphs.
\newblock {\em Discrete Mathematics}, 33(1):29--35, 1981.

\bibitem[DMS03]{approx_min_distance}
Ilya Dumer, Daniele Micciancio, and Madhu Sudan.
\newblock Hardness of approximating the minimum distance of a linear code.
\newblock {\em IEEE Transactions on Information Theory}, 49(1):22--37, January
  2003.
\newblock Preliminary version in FOCS 1999.

\bibitem[HMP06]{GSprep}
Peter H{\o}yer, Mehdi Mhalla, and Simon Perdrix.
\newblock Resources required for preparing graph states.
\newblock In {\em Proceedings of ISAAC'06}, pages 638--649, 2006.

\bibitem[JMP11]{QSS_GS}
J{\'e}r{\^o}me Javelle, Mehdi Mhalla, and Simon Perdrix.
\newblock New protocols and lower bound for quantum secret sharing with graph
  states.
\newblock {\em arXiv:1109.1487}, 09 2011.

\bibitem[{Joy}06]{JoynerHEC}
David {Joyner}.
\newblock {On quadratic residue codes and hyperelliptic curves}.
\newblock {\em ArXiv Mathematics e-prints}, September 2006.

\bibitem[Kot68]{kotzig}
Anton Kotzig.
\newblock Eulerian lines in finite 4-valent graphs and their transformations.
\newblock In {\em Colloqium on Graph Theory}, pages 219--230. Academic Press,
  1968.

\bibitem[Lov75]{LLL}
L\'aszl\'o Lov\'asz.
\newblock Problems and results on 3-chromatic hypergraphs and some related
  questions.
\newblock In {\em Colloquia Mathematica Societatis Janos Bolyai}, pages
  609--627, 1975.

\bibitem[MS08]{MS}
Damian Markham and Barry~C. Sanders.
\newblock Graph states for quantum secret sharing.
\newblock {\em Physical Review A}, 78:042309, 2008.

\bibitem[Oum08]{Oum_vertex_minor}
Sang-Il Oum.
\newblock Approximating rank-width and clique-width quickly.
\newblock {\em ACM Trans. Algorithms}, 5:10:1--10:20, December 2008.

\bibitem[RB01]{BRQC}
Robert Raussendorf and Hans Briegel.
\newblock A one-way quantum computer.
\newblock {\em Physical Review Letters}, 86(22):5188--5191, 2001.

\bibitem[Sch04]{Schmidt}
Wolfgang~M. Schmidt.
\newblock {\em Equations over finite fields: an elementary approach}.
\newblock Kendrick Press, 2nd edition, 2004.

\bibitem[Sev06]{Schmidt_measure}
Simone Severini.
\newblock Two-colorable graph states with maximal schmidt measure.
\newblock {\em Physics Letters A}, 356:99, 2006.

\bibitem[Var97]{shortest_codeword_NPC}
Alexander Vardy.
\newblock Algorithmic complexity in coding theory and the minimum distance
  problem.
\newblock In {\em STOC}, pages 92--109, 1997.

\bibitem[Wei48]{Weil}
Andr\'e Weil.
\newblock On some exponential sums.
\newblock In {\em Proceedings of the National Academy of Sciences}, volume~34,
  pages 204--207, 1948.

\end{thebibliography}
\newcommand{\etalchar}[1]{$^{#1}$}

\appendix

\section{Proof of Theorem \ref{G_1/2}}

{\bf Theorem \ref{G_1/2}}
\emph{
 There exists $n_0 \in \mathbb{N}$ such that for all $n > n_0$ there exists a graph of order $n$ whose local minimum degree is greater than $0.189n$.
 }

\begin{proof}
Let $G$ be a graph of order $n$ where any possible edge exists with probability $\frac{1}{2}$.
We are looking for the greatest value of $c$ such that $Pr \left(\delta_{loc}(G) \geq cn \right) > 0$.
Thus, we want that ``$\forall D \subseteq V(G), |D \cup Odd(D)| > cn$".
Consequently, the events to avoid are $A_D$: ``$|D \cup Odd(D)| \leq cn$".
Obviously, it is sufficient to consider only the events $A_D$ with $D \leq cn$.

For all $D$ sucht that $|D| \leq cn$, we want to get an upper bound on $Pr(A_D)$.
Let $|D|=dn$ for some $d \in \left( 0,c \right] $.
For all $u \in V \setminus D$, $Pr(``u \in Odd(D)") = \frac{1}{2}$.
If $D$ is fixed, the events $``u \in Odd(D)"$ when $u$ is outside $D$ are independent.
Therefore, if the event $A_D$ is true, any but at most $(c-d)n$ vertices outside $D$ are contained in $Odd(D)$.
There are $(1-d)n$ vertices outside $D$, then $Pr(A_D) = \left( \frac{1}{2} \right)^{(1-d)n} \sum_{k=0}^{(c-d)n} {(1-d)n \choose k} \leq \left( \frac{1}{2} \right)^{(1-d)n} 2^{(1-d)nH\left( \frac{c-d}{1-d} \right)} = 2^{(1-d)n \left[ H\left( \frac{c-d}{1-d} \right) - 1 \right]}$ where $H : t \mapsto -t\log(t)-(1-t)\log(1-t)$ is the binary entropy function.

Let $\sigma(A_D) = \frac{1}{r{n\choose |D|}}$.
Let $p = \prod_{|D'| \leq cn} (1 - \sigma(A_{D'}))$.
In order to apply the L\'ovasz Local Lemma (Lemma \ref{LLL}), we want to have $Pr(A_D) \leq \sigma(A_D)p$.
The product $p$ verifies $p = \prod_{|D'| = 1}^{cn} \left(1 - \frac{1}{r{n\choose |D'|}} \right)^{{n \choose |D'|}} = \left[ \prod_{|D'| = 1}^{cn}  \left( 1 - \frac{1}{r{n  \choose |D'|}} \right)^{r{n  \choose |D'|}} \right]^{\frac{1}{r}}$.
The function $f : x \mapsto \left( 1-\frac{1}{x} \right)^x$ verifies $f(x) \geq \frac{1}{4}$ when $x \geq 2$, therefore $p \geq \left( \frac{1}{4} \right)^{\frac{cn}{r}} = 2^{-\frac{2cn }{r}}$ for any $r \geq 2$.
Thus, it is sufficient that $2^{(1-d)n \left[ H\left( \frac{c-d}{1-d} \right) - 1 \right]} \leq \frac{1}{r{n \choose dn}}2^{-\frac{2cn }{r}}$.
Rewriting this inequality with the bound ${n  \choose dn } \leq 2^{nH\left( \frac{dn }{n } \right)}$ and applying the logarithm function and dividing by $n$ gives the following sufficient condition: $(1-d) \left[ H \left( \frac{c-d}{1-d} \right) -1 \right] + H(d) + \frac{2c}{r} + \frac{\log_2{r}}{n} \leq 0$.
Taking $r=n$, the condition becomes asymptotically $(1-d) \left[ H \left( \frac{c-d}{1-d} \right) -1 \right] + H(d) \leq 0$.

Numerical analysis shows that this condition is true for any $c \leq 0.189$ and for all $d$ such that $0 < d \leq cn$.
Therefore, Lemma \ref{LLL} ensures that $Pr \left( \big\{\ \overline{A_D}\ \big|\ |D| \leq cn \big\} \right) \geq p > 0$, which proves the existence of at least one graph $G$ of order $n$ such that $\delta_{loc}(G) \geq 0.189n$.
This ends the proof of Theorem \ref{G_1/2}.
\hfill $\Box$
\end{proof}

\end{document}